\documentclass{article}
\PassOptionsToPackage{numbers,sort}{natbib}   
\usepackage{amsmath,amssymb,amsthm,mathtools}   

\newif\ifnipssty
\IfFileExists{neurips_2026.sty}{\usepackage[preprint]{neurips_2026}\nipsstytrue}{}
\ifnipssty
  \workshoptitle{ML$\times$OR: Mathematical Foundations and Operational Integration of Machine Learning for Uncertainty-Aware Decision-Making}
\else
  \usepackage[letterpaper,textwidth=5.5in,textheight=9in]{geometry}
  \usepackage{times}
  \renewcommand{\And}{\and}
  \date{}
\fi

\usepackage{booktabs,multirow}
\usepackage{xcolor}
\usepackage[numbers,sort]{natbib}
\setcitestyle{numbers,square}
\usepackage[colorlinks=true,linkcolor=black,citecolor=black,urlcolor=blue]{hyperref}

\theoremstyle{plain}
\newtheorem{theorem}{Theorem}
\newtheorem{proposition}{Proposition}
\newtheorem{corollary}{Corollary}
\theoremstyle{definition}

\AtBeginDocument{%
  \setlength{\abovedisplayskip}{4pt plus 1pt minus 2pt}%
  \setlength{\belowdisplayskip}{4pt plus 1pt minus 2pt}%
  \setlength{\abovedisplayshortskip}{2pt}%
  \setlength{\belowdisplayshortskip}{2pt}}
\newcommand{\tself}[1]{\tau^{\mathrm{self}}_{#1}}
\newcommand{\Nb}{\mathcal{N}}
\newcommand{\Gr}{\mathcal{I}}
\newcommand{\R}{\mathbb{R}}
\newcommand{\E}{E}
\newcommand{\Var}{\mathrm{var}}
\newcommand{\sech}{\operatorname{sech}}

\title{Spillover Effects under Network Interference\\ When Neighbours' Treatment Effects Are Heterogeneous}

\author{%
  Faezeh Dehghan Tarzjani\\
  University of Southern California\\
  \texttt{dehghant@usc.edu}
  \And
  Bhaskar Krishnamachari\\
  University of Southern California\\
  \texttt{bkrishna@usc.edu}
}

\begin{document}
\maketitle

\begin{abstract}
Optimising budget-constrained network interventions requires evaluating not just who is connected, but predicting how strongly individual recipients will propagate the treatment's benefits. Existing models predict spillover from neighbours' treatments and attributes. We argue that spillover also depends on how strongly each neighbour responded to its own treatment. We prove that no model in which neighbours' responsiveness enters separately from their treatments can capture this interaction, and we introduce \textsc{SpilloverNet}, a graph neural network designed to preserve neighbour-level response dynamics. Because a neighbour's response is unobserved, a natural approach is to estimate it from covariates and plug it in. However, we prove that any predictor relying solely on standard network data faces an irreducible error floor set by unobserved personal responsiveness. Empirically, at low heterogeneity the plug-in's correlation with the true response looks acceptable while its spillover error is already 24.5\% against an oracle 13.8\%; as heterogeneity grows its error climbs to 51.9\%, worse than using no responsiveness estimate at all. A per-unit estimate from a direct-response measurement, collected in a small pilot before spillover arrives, escapes this bound and cuts regret against oracle-$\tau$ targeting by up to 14 percentage points in a budgeted targeting problem. On two real social graphs, \textsc{SpilloverNet} reaches 7.7--8.4\% error, outperforming standard GNNs as well as specialised causal-representation baselines.
\end{abstract}

\section{The problem}

Decision-makers in public health, development economics and digital platforms allocate
scarce treatments over a network and rely on diffusion to stretch a fixed budget: a few
vaccinations to protect a community, microloans seeded with well-connected farmers, a
handful of rewarded users to start a cascade. Each allocation rests on a prediction: the
spillover $S_i$, the benefit unit $i$ receives from its treated neighbours. Error in
that prediction becomes policy regret \citep{ma2021}: formally, the planner maximises
$\sum_i\tau_iZ_i+S_i(Z)$ over allocations $Z$ subject to a budget
(eq.~\ref{eq:decision}, Section~\ref{sec:exp}). The interference literature offers estimands and
estimators for $S_i$ \citep[e.g.][]{hudgens2008,aronow2017,forastiere2021,leung2022},
reviewed in \citet{schweinberger2025}; all predict it from \emph{which} neighbours were
treated.

When responsiveness varies this is not enough. Suppose two of a unit's neighbours are
treated. One responds strongly and passes the benefit on; the other barely responds and
passes on almost nothing. A rule that counts treated neighbours scores the two the
same, and will spend budget on well-connected units that transmit little. What matters
is the sender's own response,
\begin{equation}
\tself{j}=Y_j(1,0_{-j})-Y_j(0,0_{-j}),
\label{eq:tauself}
\end{equation}
and we study models where $S_i$ depends on $\{\tself{j}: j\in\Nb(i)\}$. The difficulty
is that $\tself{j}$ is not observed when the allocation is made, and, as we show,
estimating it from covariates does not work. A two-stage design does: a pilot treats a
random subset and records an intermediate outcome $Y^{\mathrm{mid}}_j$ (early
engagement, say, or an early antibody titre) after treatment but before spillover
arrives. This gives a per-unit estimate $\hat\tau_j$ that retains individual
heterogeneity.

The paper makes three points. First, no existing model expresses sender-level response
(Section~\ref{sec:gap}). Second, a covariate-based estimate cannot stand in for the
unobserved response: any such predictor has an irreducible error floor, which we
quantify (Section~\ref{sec:floor}). Third, a pilot estimate can. We build
\textsc{SpilloverNet}, a graph neural network with a per-neighbour gate that accepts a
per-unit response estimate (Section~\ref{sec:est}), and show that targeting on the
pilot estimate cuts regret against oracle-$\tau$ targeting by up to $14$ points relative
to a degree rule (Section~\ref{sec:exp}). The closest prior work is \citet{zhang2025},
who let the spillover coefficient vary with the sender's community, and
\citet{leungloupos2025}, who place GNNs inside the interference framework; neither has
a channel for the sender's individual response.

\section{Why existing models cannot express this}
\label{sec:gap}

\textbf{Setup.} Units $i\in V$ on a network with neighbourhoods $\Nb(i)$; covariates
$X_i$, binary treatment $Z_i$, outcome $Y_i$. Treatment is random given the network
(seeding plus diffusion is allowed), so there is no confounding. Responsiveness is a
fixed trait that shows up only under treatment:
\begin{equation}
\tau_i=\tau_0+\varphi(X_i)+\eta_i,\qquad
\eta_i\sim N(0,\sigma^2_\eta)\ \text{i.i.d.},\quad \eta_i\perp X_i .
\label{eq:tau}
\end{equation}
Here $\tau_i=\tself{i}$ from \eqref{eq:tauself}; $\tilde\tau_j=(\tau_j-\bar\tau)/s_\tau$
is its within-graph standardisation, and in population
$\sigma^2_\tau=\Var\{\varphi(X)\}+\sigma^2_\eta$. Covariates can recover $\varphi(X_i)$;
they cannot recover $\eta_i$. Outcomes follow $Y_i=\mu(X_i)+\tau_iZ_i+S_i+\varepsilon_i$,
with spillover a saturating sum over treated neighbours, each scaled by its own
responsiveness,
{\small
\begin{equation}
S_i\approx\psi(R_i),\qquad
R_i=\textstyle\sum_{j\in\Nb(i)}\theta^b_{ij}\,Z_j\,\{1+\alpha_\tau\tanh\tilde\tau_j\},
\label{eq:spill}
\end{equation}}
plus a small two-hop term (full model in Supplement~A). Here
$\psi(u)=\operatorname{sign}(u)\sqrt{|u|}$, the odd square root, since $R_i$ can be
negative; $\bar R_i=\E(R_i\mid\Gr)$ is its mean given the network information $\Gr$
defined below. The saturation gives diminishing returns in treated neighbours. The
particular $\psi$ and $\tanh$ are chosen for concreteness;
Proposition~\ref{prop:nonsep} needs only a non-constant modulation, and the floor in
Theorem~\ref{thm:floor} exists for any smooth one.

\textbf{The gap.} Existing models predict $S_i$ from neighbours' treatments and
attributes, across the assignment-vector \citep{hudgens2008}, exposure-mapping
\citep{aronow2017,leung2022}, mediation \citep{vanderweele2013,kundu2026} and
structural \citep{manski1993,zhang2025} strands (paper-by-paper record in
Supplement~C). None of them has a channel for $\tself{j}$, and one cannot simply be
added as another attribute. Call a model \emph{separable} if neighbours' responses
enter it only through a summary of their own,
$F\{s(z_{\Nb(i)},X_{\Nb(i)}),\,s_\tau(\tilde\tau_{\Nb(i)})\}$, with $s,s_\tau$
invariant to relabelling neighbours; attribute-dependent weights such as
$\sum_j\beta(X_j)Z_j$ \citep{zhang2025} are allowed inside $s$.

\begin{proposition}[Non-separability]
\label{prop:nonsep}
The model above is not separable, for any choice of $s$, $s_\tau$, $F$.
\end{proposition}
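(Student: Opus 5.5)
The plan is to exhibit two neighbourhood configurations that every separable model must treat identically, while the model in \eqref{eq:spill} gives them different spillovers. The key observation is that a separable $F$ sees the treatment pattern and the responsiveness profile only through two relabelling-invariant summaries computed \emph{separately}. It therefore cannot tell which responsiveness value sits on which treated neighbour. The true $R_i$ depends on exactly that pairing.

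\textbf{Construction.} Take a unit $i$ with two neighbours $j,k$, equal base weights $\theta^b_{ij}=\theta^b_{ik}=\theta>0$, and identical covariates $X_j=X_k=x$. Choose two standardised responsiveness values $a\neq b$ with $\tanh a\neq\tanh b$. This is possible precisely because the modulation $u\mapsto 1+\alpha_\tau\tanh u$ is non-constant (with $\alpha_\tau\neq 0$); it is the only property of $\tanh$ we use. Consider two configurations:
\begin{itemize}
\item Configuration A: $Z_j=1$, $Z_k=0$, $\tilde\tau_j=a$, $\tilde\tau_k=b$.
\item Configuration B: the same treatments, but with the responsiveness values swapped, $\tilde\tau_j=b$, $\tilde\tau_k=a$.
\end{itemize}

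\textbf{Step 1: the summaries coincide.} In both configurations the treatment-and-attribute input $(z_{\Nb(i)},X_{\Nb(i)})$ is literally the same, so $s$ takes the same value. The responsiveness multiset $\{a,b\}$ is also the same, so by permutation invariance $s_\tau$ takes the same value. Hence every separable $F$ outputs the same prediction in A and B. Attribute-dependent weights $\beta(X_j)$ inside $s$ do not help, since the covariates are held equal.

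\textbf{Step 2: the true spillover differs.} Here $R^A_i=\theta(1+\alpha_\tau\tanh a)$ and $R^B_i=\theta(1+\alpha_\tau\tanh b)$, and these are unequal. Since $\psi$ is strictly increasing, $\psi(R^A_i)\neq\psi(R^B_i)$. So no choice of $s,s_\tau,F$ reproduces the model.

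\textbf{Anticipated obstacle.} The main subtlety is realisability and well-definedness, not the algebra.
\begin{itemize}
\item The two-hop term and the within-graph standardisation must not break the argument. I would handle this by taking the rest of the graph identical in both configurations, so that $\bar\tau$ and $s_\tau$ (the standard deviation) are unchanged by a swap of two values. I would also arrange that $j$ and $k$ have no further treated neighbours, or symmetric ones, so the two-hop contribution is equal in A and B.
\item Under \eqref{eq:tau}, $\tilde\tau$ differences with equal $X$ arise from $\eta$. Because $\eta$ has full support, both configurations occur with positive density. The predictions therefore disagree on a set of positive probability, not merely at a single point.
\end{itemize}
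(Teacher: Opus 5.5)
Your argument is the paper's construction up to relabelling $j\leftrightarrow k$. The paper fixes $\tilde\tau_j=a$, $\tilde\tau_k=b$ and swaps which neighbour is treated (with $X_j=X_k$, equal weights, $D_i=\emptyset$ and $c_i=1$), whereas you fix the treatments and swap the responses. Both finish with invariance of the summaries and injectivity of $\psi$.

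Your version is in fact more economical. Your swap leaves $(z_{\Nb(i)},X_{\Nb(i)})$, $c_i$ and the two-hop term literally unchanged, so you need neither $X_j=X_k$, nor equal weights, nor any condition on the two-hop ball.

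You should drop the equal-covariate assumption, because as written your closing positive-probability remark fails: $X_j=X_k$ is a null event under continuous covariates. Without that assumption the remark goes through, since swapping $(\tau_j,\tau_k)$ is an affine map of $(\eta_j,\eta_k)$ and therefore preserves null sets.
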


The proof takes three lines (Supplement~B). Take a unit with two neighbours who look
identical on covariates, and treat first the stronger responder, then the weaker. A
separable model sees the same inputs in both cases, one treated neighbour and the same
two people, so it must predict the same spillover; the true spillover differs. Every
model in Supplement~C is either separable, with responsiveness absent, or reads the
sender's response only through its realised outcome $Y_j$, where it is mixed with
$S_j$ and $\varepsilon_j$ (see Limitations).

\section{Any covariate-based predictor has irreducible error}
\label{sec:floor}

The result in this section concerns the information available, not any particular
estimator. Let $\Gr$ be the information in $(X,Z,G)$. If covariates leave individual
variation in responsiveness unexplained ($\sigma_\eta>0$), every predictor restricted
to $\Gr$ has irreducible error. The natural fix, estimating responsiveness from
covariates with a conditional-average-treatment-effect (CATE) method and plugging the
estimate in, is one such predictor: every CATE method targets $\E(\tau\mid X)$, and
that expectation removes $\eta$.

\begin{theorem}[Floor]
\label{thm:floor}
Any predictor $\hat S_i$ that uses only $(X,Z,G)$ satisfies
$\E\{(\hat S_i-S_i)^2\}\ge V^\star_i:=\E\{\Var(S_i\mid\Gr)\}$, and for the model above
\begin{equation}
V^\star_i\;\approx\;(\sigma^2_\eta/\sigma^2_\tau)\,\alpha_\tau^2\,(4|\bar R_i|)^{-1}
\textstyle\sum_{j\in\Nb(i)}Z_j\,(\theta^b_{ij})^2\sech^4(m_j),
\qquad m_j=\E(\tilde\tau_j\mid X_j),
\label{eq:floor}
\end{equation}
where $\bar R_i=\E(R_i\mid\Gr)$ as in \eqref{eq:spill}, for $\bar R_i$ bounded away
from $0$, up to the threshold factor $c_i$ and a smaller two-hop term (exact form and
proof in Supplement~B).
\end{theorem}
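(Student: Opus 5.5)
The argument has two parts: an exact, model-free lower bound, and then an approximation of that bound for the model of Section~\ref{sec:gap}. For the first part I would use the usual conditional-variance decomposition. If $\hat S_i$ is $\Gr$-measurable, then
\[
\E\{(\hat S_i-S_i)^2\}=\E\{(\hat S_i-\E(S_i\mid\Gr))^2\}+\E\{\Var(S_i\mid\Gr)\},
\]
because the cross term disappears after conditioning on $\Gr$. The first term is nonnegative, so the bound $\E\{(\hat S_i-S_i)^2\}\ge V^\star_i$ holds, with equality for $\hat S_i=\E(S_i\mid\Gr)$. This part needs nothing about $\psi$, $\tanh$ or the structure of $\tau$. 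A plug-in CATE predictor is simply one $\Gr$-measurable choice, so it is covered as well.

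For the second part I would compute $\Var(S_i\mid\Gr)$ using a delta method in two layers. The only randomness left given $\Gr=(X,Z,G)$ comes from the $\eta_j$, the $\varepsilon$ terms (which do not enter $S_i$), and the two-hop term. The $\eta_j$ are i.i.d.\ and independent of $X$, so the standardised responses satisfy $\tilde\tau_j=m_j+\eta_j/s_\tau$, with $s_\tau\approx\sigma_\tau$ treated as a fixed constant in the large-graph limit. I would linearise the inner map around $m_j$:
\[
\tanh\tilde\tau_j\approx\tanh m_j+\sech^2(m_j)\,\eta_j/\sigma_\tau .
\]
Since the $\eta_j$ are independent, this gives
\[
\Var(R_i\mid\Gr)\approx\alpha_\tau^2(\sigma_\eta^2/\sigma_\tau^2)\sum_{j\in\Nb(i)}Z_j(\theta^b_{ij})^2\sech^4(m_j),
\]
using $Z_j^2=Z_j$. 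I would then linearise the outer map $\psi$ around $\bar R_i$. For $u$ away from $0$ its derivative is $\psi'(u)=\tfrac12|u|^{-1/2}$, so $\Var(S_i\mid\Gr)\approx(4|\bar R_i|)^{-1}\Var(R_i\mid\Gr)$, which is \eqref{eq:floor}. Taking the outer expectation leaves the expression unchanged, because it is already $\Gr$-measurable.

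The main obstacle is controlling the remainders so that ``$\approx$'' means something precise. There are four pieces to handle.
\begin{itemize}
\item Second-order terms in $\tanh$: their size is governed by $\sup|\tanh''|\le 1$ times $\sigma_\eta^2/\sigma_\tau^2$.
\item Second-order terms in $\psi$: these are controlled only when $\bar R_i$ is bounded away from $0$, since $\psi''$ blows up at $0$. This is exactly why the theorem includes that hypothesis.
\item The fluctuation of the sample standardisation $\bar\tau, s_\tau$: this is $O(|V|^{-1/2})$, and each $\eta_j$ affects $\bar\tau$ and $s_\tau$ only at order $1/|V|$.
\item The two-hop term.
\end{itemize}
I would state the exact result as a Taylor bound, namely leading term plus $O(\sigma_\eta^3/\sigma_\tau^3\cdot|\bar R_i|^{-3/2})$. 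I would also absorb the saturation threshold of the full model in Supplement~A into the multiplicative factor $c_i$. The two-hop contribution would go into a separate additive term of smaller order, since it carries an extra factor of $\theta^b$. The first thing I would try is a Lipschitz and second-moment bound on each Taylor remainder, using that the $\eta_j$ are Gaussian.
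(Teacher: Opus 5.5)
Your proposal follows the paper's proof essentially step for step: the $L^2$ projection identity gives the bound, and a two-layer delta method (through $\tanh$ about $m_j$, then through $\psi$ about $c_i\bar R_i$) gives the approximation, with independence of the $\eta_j$ removing cross terms, the within-graph standardisation held constant up to an $O(N^{-1/2})$ remainder, and $c_i$ entering as a multiplicative factor. One correction: the two-hop term $S^{(2)}_i$ carries no factor of $\theta^b_{ij}$; it is small because of $\lambda_2$ and the $(|D_i|+1)^{-1}$ normalisation, and its variance adds with no cross term because $D_i\cap\Nb(i)=\emptyset$ makes its $\eta_k$ independent of those in $R_i$.
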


\begin{corollary}
\label{cor:cate}
Substituting any estimate of $\E(\tau\mid X)$ for $\tau$, applied to units it was not
fit on, gives spillover error at least $V^\star_i$, no matter the sample size or the
estimator.
\end{corollary}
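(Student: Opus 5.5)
The statement immediately above is Corollary~\ref{cor:cate}, and the plan is to reduce it directly to Theorem~\ref{thm:floor}. The point is that a plug-in predictor is a special case of a predictor restricted to $\Gr$. Let $\hat m$ be any estimate of $\E(\tau\mid X)$ fit on a training sample $\mathcal D$. Let $i$ be a unit whose neighbourhood is not in $\mathcal D$. Write the plug-in spillover prediction as
\[
\hat S_i=\hat\psi\Bigl(\textstyle\sum_{j\in\Nb(i)}\theta^b_{ij}Z_j\{1+\alpha_\tau\tanh\hat{\tilde\tau}_j\}\Bigr),
\qquad \hat\tau_j=\hat m(X_j).
\]
This is some measurable function $h_{\mathcal D}(X,Z,G)$, and $\hat\psi$ may itself be learned. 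Since $\mathcal D$ involves other units, I need its randomness to be independent of the residuals $\eta_j$ and $\varepsilon_j$ of the evaluated units. Given that independence, conditioning on $\mathcal D$ makes $\hat S_i$ $\sigma(\Gr,\mathcal D)$-measurable. Also, $S_i$ is conditionally independent of $\mathcal D$ given $\Gr$. This is the step where ``applied to units it was not fit on'' enters. Without it, in-sample fitting could memorise $\eta_j$ through $Y_j$, and the bound would fail.

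Next, apply the projection inequality from Theorem~\ref{thm:floor} conditionally on $\mathcal D$:
\[
\E\{(\hat S_i-S_i)^2\mid\mathcal D\}
=\E\{(\hat S_i-\E(S_i\mid\Gr))^2\mid\mathcal D\}+\E\{\Var(S_i\mid\Gr)\}.
\]
The cross term vanishes because $\hat S_i-\E(S_i\mid\Gr)$ is measurable given $(\Gr,\mathcal D)$. It also uses $\E\{S_i-\E(S_i\mid\Gr)\mid\Gr,\mathcal D\}=0$, which follows from the conditional independence. Dropping the nonnegative first term and averaging over $\mathcal D$ gives error at least $V^\star_i$. Nothing here depends on $|\mathcal D|$ or on the form of $\hat m$. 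That gives ``no matter the sample size or the estimator''.

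To make the bound concretely nontrivial, I would add a remark. Even the ideal plug-in with $\hat m=\E(\tau\mid X)$ leaves $\eta_j=\tau_j-\E(\tau_j\mid X_j)$, which has variance $\sigma^2_\eta>0$ and is independent of $\Gr$. The approximation \eqref{eq:floor} then gives $V^\star_i>0$ whenever some treated neighbour exists and $\alpha_\tau\neq0$.

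The main obstacle is justifying the conditional independence. Two issues arise. First, the within-graph standardisation $\tilde\tau_j$ uses $\bar\tau$ and $s_\tau$, which depend on all units, including training units. Second, training units may share neighbours with test units. I would handle both by treating $\bar\tau$ and $s_\tau$ as their population limits, with a vanishing correction, and by requiring that test neighbourhoods be disjoint from the fitting sample. Under these conditions, the $\eta_j$ entering $S_i$ are independent of $\mathcal D$ by the i.i.d.\ assumption in \eqref{eq:tau}.
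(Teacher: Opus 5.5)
Your reduction is the paper's. Once the CATE fit is fixed by its training data, the plug-in is a function of $(X,Z,G)$ and Theorem~\ref{thm:floor} applies. $V^\star_i$ involves neither the sample size nor the estimator, and $\E(\eta_j\mid X_j)=0$ is what keeps even the ideal plug-in from recovering $\eta_j$. Conditioning on $\mathcal D$ makes explicit the independence the paper's proof presumes when it calls the plug-in ``$\Gr$-measurable'', which is worthwhile. The tidiest form is Theorem~\ref{thm:floor}'s projection identity with $\sigma(\Gr,\mathcal D)$ in place of $\Gr$, followed by $\Var(S_i\mid\Gr,\mathcal D)=\Var(S_i\mid\Gr)$. (If $\mathcal D$ and $\Gr$ share units, your conditional display needs $\E\{\Var(S_i\mid\Gr)\mid\mathcal D\}$ on the right. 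It is correct only after averaging over $\mathcal D$, but that is all you use.)

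The step that fails is your sufficient condition for that independence. Keeping $\Nb(i)$ out of the fitting sample does not make $\mathcal D$ independent of the $\eta_j$ that enter $S_i$. A training unit $k\notin\Nb(i)$ adjacent to a treated $j\in\Nb(i)$ has an outcome $Y_k$ containing $\psi(c_kR_k)$, and $R_k$ contains the term $\theta^b_{kj}Z_j\{1+\alpha_\tau\tanh\tilde\tau_j\}$. A training unit with $j\in D_k$ reads $\eta_j$ through $S^{(2)}_k$. This is exactly the shared-neighbour case you raised, so the appeal to the i.i.d.\ assumption in \eqref{eq:tau} does not go through. Since the corollary quantifies over every estimator, nothing then prevents $\hat m(X_j)$ from carrying information about $\eta_j$.

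The real requirement is that no training outcome depend on $\{\eta_j:j\in B_2(i)\setminus\{i\}\}$, for example by using only training units at graph distance at least five from $i$. Even then, the within-graph $\bar\tau$ and $s_\tau$ couple all units. Your population-limit device therefore gives the bound only up to a vanishing term, which sits uneasily with ``no matter the sample size''. The exact version fits the estimator on data independent of the test graph. That is the protocol the paper actually uses: the R-learner is fit on training graphs and applied to held-out graphs (Supplement~\ref{app:details}). Then $\mathcal D$ is independent of $(X,Z,G,\eta)$ of the test graph, and your argument closes with no approximation.
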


The floor grows with the unexplained share $\sigma_\eta/\sigma_\tau$. It is roughly
flat in treated degree, because the saturation $\psi$ turns the sum into a weighted
mean of $\theta^b_{ij}$. The units that matter most for targeting are therefore those
whose neighbours' responses the covariates predict worst. Since the generator is known,
we compute $V^\star_i$ exactly by resampling $\eta$ with $(X,Z,G)$ fixed
(Table~\ref{tab:decomp}).

\section{SpilloverNet: a graph neural network with a normalised gate}
\label{sec:est}

After $L$ message-passing layers, node $i$'s vector $h^{(L)}_i$ summarises its $L$-hop
ball. Our spillover depends only on the two-hop ball (Supplement~B, Prop.~B.2), so two
layers reach every argument of $S_i$. We train four, which fit better at the degrees we
study; since the target is a fixed two-hop function, no fixed-point condition is needed
at any depth. Two structural terms of $\theta^b_{ij}$ cannot be computed by message
passing from our inputs (Supplement~B, Prop.~B.8; Limitations). Standard GNNs treat all
neighbours alike: GraphSAGE averages them and a GCN sums them with degree weights.
Neither can capture a mechanism whose point is that neighbours differ. SpilloverNet
separates a unit's own state from its neighbours' messages and gives each neighbour its
own weight:
{\small
\begin{equation}
h^{(\ell)}_i=\sigma\big(U_\ell\,h^{(\ell-1)}_i+W_\ell\textstyle\sum_{j\in\Nb(i)}\alpha^{(\ell)}_{ij}\,h^{(\ell-1)}_j\big),
\qquad
\alpha^{(\ell)}_{ij}=\rho_\ell\,\operatorname{softmax}_{j\in\Nb(i)}\big(e^{(\ell)}_{ij}\big).
\label{eq:layer}
\end{equation}}
$U_\ell$ carries $i$'s own state and $W_\ell$ the aggregated messages, so at every
layer the neighbours' contribution, which is the spillover, has its own parameters. The
decoupling follows the D-GCN of \citet{dehghan2025dgcn}, built for wireless
interference, where throughput depends on each neighbour's own access probability
\citep{tarzjani2025pcsma}. The score $e^{(\ell)}_{ij}$ is learned from $j$'s vector,
$i$'s vector and the edge features, and measures how much neighbour $j$ matters to $i$.
The softmax turns the scores into weights, and the learned scalar $\rho_\ell\in(0,1)$
makes them sum to less than one at every node; we call the result a \emph{normalised
gate}. Supplement~F compares both choices with graph attention
\citep{velickovic2018,brody2022}: decoupling the self path is worth $13$--$18$ NMAE
points, and $\rho_\ell<1$ a further $3.1$ points at Flickr's mean degree of $63$,
though it costs $0.4$--$0.9$ points at degree $5$--$10$.

Node inputs are $(X_i,\,d_i,\,Z_i,\,\tau_i)$, with $\tau_i$ replaced by an estimate in
the feasible regime. Training minimises squared error on $S_i$, and we report
$\mathrm{NMAE}=\sum_i|\hat S_i-S_i|/\sum_i|S_i|$. Setting $\alpha_{ij}=1/d_i$ removes
the gate; we call this the Mean comparator. Supplement~E examines a contraction
certificate, which the trained models do not meet.

\section{Experiments}
\label{sec:exp}

\textbf{Protocol.} $500$ graphs of $100$ nodes from three random-graph families;
$\varphi(X)=0.3X_1+0.2\,1\{X_2>0\}$, so $\Var\{\varphi(X)\}=0.10$; $10\%$ seeding plus
three diffusion rounds; $200$ epochs; five seeds; $\sigma_\eta\in\{0.1,0.3,0.6,1.0,1.5\}$,
over which $\sigma_\eta/\sigma_\tau$ runs $0.30\to0.98$. Paired $t$-tests by seed;
decision rules fixed before each run.

\textbf{SpilloverNet uses the response channel.} With true $\tau$, SpilloverNet reaches
$13.8$--$14.1\%$ NMAE, flat in $\sigma_\eta$. Shuffling $\tau$ across nodes raises the
error to $59$--$63\%$ ($p<10^{-6}$), and removing the gate costs $0.7$--$1.9$ points,
more at larger $\sigma_\eta$ ($p\le0.03$; Supplement~H).

\begin{table}[t]
\caption{Feasible regime, five seeds. Floor: NMAE of the conditional median, the
$L_1$ analogue of Theorem~\ref{thm:floor}, by Monte Carlo ($50$ graphs, $300$ resamples
of $\eta$). No-$\tau$: SpilloverNet with the response input zeroed. $r$: correlation
between estimated and true $\tau$.}
\label{tab:decomp}
\centering\scriptsize\setlength{\tabcolsep}{4pt}
\begin{tabular}{lccccc}
\toprule
$\sigma_\eta$ & $0.1$ & $0.3$ & $0.6$ & $1.0$ & $1.5$\\
\midrule
Floor & $14.1$ & $30.3$ & $37.5$ & $40.2$ & $41.1$\\
No-$\tau$ SpilloverNet & $21.2$ & $34.9$ & $42.1$ & $44.4$ & $45.6$\\
CATE plug-in & $24.5$ & $40.0$ & $47.5$ & $50.5$ & $51.9$\\
\quad $r(\hat\tau,\tau)$ & $0.67$ & $0.49$ & $0.30$ & $0.17$ & $0.10$\\
Direct-response plug-in & $20.9$ & $29.9$ & $27.5$ & $22.8$ & $19.1$\\
\quad $r(\hat\tau,\tau)$ & $0.24$ & $0.28$ & $0.34$ & $0.37$ & $0.39$\\
Oracle $\tau$ & $13.8$ & $14.1$ & $14.0$ & $13.8$ & $13.8$\\
\bottomrule
\end{tabular}
\end{table}

\textbf{The smoothness trap.} Replacing true $\tau$ with a CATE estimate more than
doubles the error across the sweep (Table~\ref{tab:decomp}), and it does so by two
routes. At $\sigma_\eta\le0.3$, where $\sigma^2_\eta\le\Var\{\varphi(X)\}=0.10$, the
usual diagnostic misses the problem. The correlation $r(\hat\tau,\tau)$ is computed
over the whole population, where $\tau$ varies mostly through $\varphi(X)$, which CATE
recovers ($r=0.67$ and $0.49$). Spillover, however, depends on differences
\emph{within} a neighbourhood, and those are mostly $\eta$, which CATE removes; the
error is $24.5$ and $40.0\%$ against $13.8$ and $14.1\%$ with true $\tau$. At
$\sigma_\eta\ge0.6$, $\eta$ dominates and the diagnostic does catch it ($r\le0.30$),
but the plug-in is still what the model is fed. The floor and No-$\tau$ rows show what
this costs. At $\sigma_\eta=1.5$ no covariate-only method can beat $41.1\%$, and
SpilloverNet with the response input zeroed reaches $45.6\%$, close to the bound. The
CATE plug-in reaches $51.9\%$, worse than using no responsiveness estimate at all, and
it is worse at every $\sigma_\eta$: the gate learns to trust a channel that carries no
within-neighbourhood signal.

\textbf{A per-unit estimate escapes the floor.} The bound covers only predictors built
from $(X,Z,G)$. A \emph{direct-response} measurement is an outcome recorded after
treatment but before spillover arrives, $Y^{\mathrm{mid}}_j=\mu(X_j)+\tau_jZ_j+\varepsilon'_j$;
for treated units $\hat\tau_j=Y^{\mathrm{mid}}_j-Y^{\mathrm{pre}}_j$, with
$Y^{\mathrm{pre}}$ a pre-treatment baseline. The estimate is per person, so it keeps
$\eta_j$, and it contains no $S_j$. It beats the CATE plug-in by $4$ to $33$ points at
every $\sigma_\eta$, and its correlation with $\tau$ rises with $\sigma_\eta$
($0.24\to0.39$), the opposite of CATE's, as we predicted before the runs
(Supplement~I). For $\sigma_\eta\ge0.6$ it is below the floor, at $27.5\%$ against
$37.5\%$ at $0.6$ and $19.1\%$ against $41.1\%$ at $1.5$, which is possible only
because $Y^{\mathrm{mid}}$ lies outside $(X,Z,G)$. Corollary~\ref{cor:cate} can be
read constructively: it says what has to be measured. A pre-period difference, which
carries $S_j$, is $1.4$--$2.9$ points worse for $\sigma_\eta\ge0.6$ and within about
one seed s.d.\ of the direct-response estimate below that (Supplement~D).

\begin{table}[t]
\caption{Targeting under a $10\%$ budget: regret (\%) of \eqref{eq:decision} relative
to the oracle-$\tau$ policy, mean $\pm$ s.e.\ over $50$ graphs. Last column: mean true
$\tau$ (and degree) of the units selected at $\sigma_\eta=1.5$; population mean
$\tau\approx0.6$. Budgets $20$--$30\%$: Supplement~G.}
\label{tab:target}
\centering\scriptsize\setlength{\tabcolsep}{3.5pt}
\begin{tabular}{lcccccc}
\toprule
$\sigma_\eta$ & $0.1$ & $0.3$ & $0.6$ & $1.0$ & $1.5$ & $\bar\tau_{\rm sel}$ ($\bar d_{\rm sel}$)\\
\midrule
Oracle $\tau$ & $0$ & $0$ & $0$ & $0$ & $0$ & $2.96$ ($8.8$)\\
Direct-response $\hat\tau$ & $29.9\pm1.9$ & $30.0\pm1.6$ & $29.8\pm1.7$ & $30.8\pm1.8$ & $32.9\pm1.8$ & $1.49$ ($9.6$)\\
CATE $\hat\tau$ & $29.4\pm2.3$ & $36.1\pm2.5$ & $40.6\pm2.3$ & $43.9\pm2.3$ & $46.5\pm2.1$ & $1.08$ ($8.8$)\\
Degree only & $33.0\pm2.2$ & $34.5\pm2.1$ & $37.7\pm2.2$ & $41.9\pm2.4$ & $46.6\pm2.6$ & $0.61$ ($12.5$)\\
Random & $57.2\pm2.0$ & $59.8\pm2.1$ & $62.7\pm2.1$ & $65.8\pm2.1$ & $69.0\pm2.1$ & $0.55$ ($6.5$)\\
\bottomrule
\end{tabular}
\end{table}

\textbf{Targeting under a budget.} The decision problem is
\begin{equation}
\textstyle\max_{Z\in\{0,1\}^N}\ \sum_i\tau_iZ_i+S_i(Z)\quad\text{s.t.}\quad\sum_iZ_i\le B,
\label{eq:decision}
\end{equation}
and Table~\ref{tab:target} reports regret. Every policy scores node $j$ by
$\hat\tau_j(1+d_j)$ and treats the top $B$. The policies differ only in where
$\hat\tau$ comes from, and regret is measured against the same rule run with true
$\tau$ rather than against the optimum of \eqref{eq:decision}; every chosen $Z$ is
evaluated in the true generator. The pilot seeds a random $10\%$ and lets treatment
diffuse. Units it treated get $\hat\tau_j$ from their own outcome and all others get
the pilot-treated mean, so outside the pilot every policy ranks by degree. Using one
heuristic for all policies means the regret gap isolates the value of the
responsiveness signal; putting SpilloverNet inside a welfare maximiser is left for
future work. Direct-response targeting holds regret near $30\%$ while degree-only rises
from $33\%$ to $47\%$, so measuring response is worth $3$ to $14$ points, more at
higher heterogeneity (the $3$ at $\sigma_\eta=0.1$ is within noise). CATE targeting is
\emph{no better than ignoring responsiveness} for $\sigma_\eta\ge0.3$, and the last
column shows why. The degree policy picks hubs with average response ($0.6$) at mean
degree $12.5$. CATE finds slightly better responders ($1.08$) but drops to degree
$8.8$. Direct response finds responders more than twice as strong ($1.49$) at degree
$9.6$, about the oracle's reach. The gap narrows as the budget grows, from $13.7$
points at $10\%$ to $8.2$ at $30\%$ (Supplement~G). The $30\%$ regret that every
feasible policy shares is the cost of coverage: the pilot reveals $\tau$ only where it
treated.

\begin{table}[t]
\caption{Real graphs, generated outcomes; NMAE (\%), mean (s.d.) over $25$ runs
(five realisations $\times$ five seeds). Attention variants on Flickr: Supplement~F.}
\label{tab:real}
\centering\scriptsize
\begin{tabular}{lcc}
\toprule
Model & Flickr ($n\!=\!7{,}575$, $\bar d\!\approx\!63$) & BlogCatalog ($n\!=\!5{,}196$, $\bar d\!\approx\!66$)\\
\midrule
SpilloverNet & $7.73$ ($0.65$) & $8.36$ ($0.63$)\\
GraphSAGE & $10.84$ ($1.22$) & $10.94$ ($0.69$)\\
MLP (no graph) & $18.65$ ($1.82$) & $17.69$ ($1.24$)\\
NetDeconf \citep{guo2020} & $19.68$ ($1.79$) & $20.92$ ($2.11$)\\
NetEst \citep{jiang2022} & $20.00$ ($2.32$) & $20.89$ ($1.95$)\\
\bottomrule
\end{tabular}
\end{table}

\textbf{Real graphs.} Table~\ref{tab:real} repeats the comparison on two real networks
with generated outcomes. The ranking is the same on both, up to the tie between
NetDeconf and NetEst. Both lose to a graph-free MLP; they are built to remove
confounding, and randomised treatment has none. When every unit responds the same, as
in the generator of \citet{zhang2025}, the gate has nothing to learn and SpilloverNet
matches Mean, as we predicted (Supplement~C).

\textbf{Limitations.} All outcomes, including those on the real topologies, are
generated by our model. We know of no dataset with observed responsiveness on a
network, and showing the mechanism in real outcomes is the main open item. A variant
that feeds neighbours' observed outcomes directly fails a scrambling test, which is the
reflection problem \citep{manski1993,angrist2014} in network form, and we exclude it
from all claims. Two structural terms of $\theta^b_{ij}$, the common-neighbour count
and the clustering coefficient, cannot be computed by message passing from the
synthetic inputs (Supplement~B, Prop.~B.8); we have not tested whether supplying them
as features closes the $14\%$ oracle-$\tau$ residual on a noise-free target.

\bibliographystyle{plainnat}

\clearpage
\appendix
\section*{Supplementary Material}

\section{Generative model: full specification}
\label{app:dgp}
Base weights
$\theta^b_{ij}=c_{\mathrm b}+c_{\mathrm s}e^{-\|X_i-X_j\|}+c_{\mathrm c}(C_i+C_j)/2
+c_{\mathrm n}|\Nb(i)\cap\Nb(j)|/(d_i+1)$ with $C_i$ the local clustering coefficient and
$(c_{\mathrm b},c_{\mathrm s},c_{\mathrm c},c_{\mathrm n})=(0.02,0.50,0.20,0.35)$;
$\mu(X_i)=0.5X_{i1}+0.3X_{i2}^2+0.2\tanh(X_{i1}X_{i2})$; threshold factor $c_i=1.3$ when
the treated fraction in $\Nb(i)$ exceeds one half, else $1$; two-hop term
$S^{(2)}_i=(|D_i|+1)^{-1}\sum_{k\in D_i}\lambda Z_k\{1+\beta_\tau\tanh\tilde\tau_k\}$ with
$D_i=B_2(i)\setminus B_1(i)$. The full spillover is
$S_i=\psi(c_iR_i)+\lambda_2S^{(2)}_i$ with $R_i$ as in \eqref{eq:spill}; constants
$\tau_0=0.5$, $\alpha_\tau=1.5$, $\beta_\tau=0.8$, $\lambda=0.6$, $\lambda_2=0.1$,
$\sigma_\varepsilon=0.3$. Diffusion
adopts an untreated node with probability $0.30$ times its treated neighbour share.

\section{Proofs}\label{app:proofs}

\subsection{Proof of Proposition \ref{prop:nonsep} (full)}
Suppose separable $s,s_\tau,F$ exist. Fix a node $i$ with exactly two neighbours $j,k$,
$X_j=X_k$, $\theta^b_{ij}=\theta^b_{ik}=\theta>0$, $D_i=\emptyset$, so $S_i=\psi(R_i)$.
In both configurations below exactly one neighbour is treated, so the treated fraction
is one half and $c_i=1$. Let $a\neq b$ and consider (I) $Z_j=1,Z_k=0$ and (II)
$Z_j=0,Z_k=1$, with $\tilde\tau_j=a,\tilde\tau_k=b$ in both. The treatment--covariate
multiset is $\{(1,X_j),(0,X_j)\}$ and the response multiset is $\{a,b\}$ in both; by
permutation invariance $s$ and $s_\tau$ agree across (I) and (II), so $F$ does, and
$S^{\mathrm{I}}_i=S^{\mathrm{II}}_i$. But
$S^{\mathrm{I}}_i=\psi\{\theta(1+\alpha_\tau h(a))\}$ and
$S^{\mathrm{II}}_i=\psi\{\theta(1+\alpha_\tau h(b))\}$. Since $h$ (here $\tanh$) is non-constant we may
choose $a,b$ with $h(a)\neq h(b)$; $\alpha_\tau\neq0$ and $\theta>0$ give distinct
arguments; $\psi$ injective gives distinct values. Contradiction. \hfill$\square$

\subsection{Proposition B.2 (strict approximate neighbourhood interference) and proof}
\textit{Statement.} In the model of Section~2 and Supplement~A, if $z,z'$ agree on the
two-hop ball $B_2(i)$ then $Y_i(z)=Y_i(z')$ exactly; hence every argument of $S_i$ lies
in $B_2(i)$, and two message-passing layers reach all of them (what they can compute
from them is Proposition~B.8).

\textit{Proof.} 
Fix $i$ and let $z,z'$ agree on $B_2(i)$. $S_i$ depends on $z$ only through
$\{Z_j:j\in\Nb(i)\}$, in $R_i$ and in $c_i$, and $\{Z_k:k\in D_i\}$, in $S^{(2)}_i$;
both index sets lie in $B_2(i)$, so $S_i(z)=S_i(z')$. The remaining dependence of $Y_i$
on $z$ is through $\tau_iZ_i$, and $i\in B_2(i)$. The terms $\mu(X_i),\tau_i,
\varepsilon_i$ are pre-treatment. Hence $Y_i(z)=Y_i(z')$ exactly. \hfill$\square$

\subsection{Theorem B.3 (layerwise contraction) and proof}
\textit{Statement.} Write $T_\ell(h)=\sigma(W_\ell\mathcal A_\ell h)$ with
$(\mathcal A_\ell h)_i=\sum_j\alpha^{(\ell)}_{ij}h_j$, $\alpha^{(\ell)}_{ij}\ge0$,
$\sum_j\alpha^{(\ell)}_{ij}=\rho_\ell$, $\sigma$ $1$-Lipschitz. Then
$\|T_\ell(h)-T_\ell(h')\|_\infty\le\rho_\ell\|W_\ell\|\,\|h-h'\|_\infty$; if
$\kappa=\max_\ell\rho_\ell\|W_\ell\|<1$ the weight-tied iteration has a unique fixed
point with geometric convergence at rate $\kappa$.

\textit{Proof.} 
For any $i$, by the triangle inequality and $\alpha_{ij}\ge0$,
$\|\sum_j\alpha_{ij}(h_j-h'_j)\|\le\sum_j\alpha_{ij}\|h_j-h'_j\|\le\rho_\ell\|h-h'\|_\infty$.
Applying $W_\ell$ multiplies by at most $\|W_\ell\|$ and $\sigma$ is $1$-Lipschitz, giving
the displayed bound. If $\kappa<1$, $T$ is a $\kappa$-contraction on the complete space
$(\R^{N\times d},\|\cdot\|_\infty)$ and the Banach fixed-point theorem gives existence,
uniqueness and the geometric rate. With a self-term $U_\ell h$ the modulus becomes
$\rho_\ell\|W_\ell\|+\|U_\ell\|$. \hfill$\square$

\subsection{Theorem B.4 (co-estimation contraction) and proof}
\textit{Statement.} With $\psi=\mathrm{id}$, $c_i\equiv1$ and no two-hop term, let
$\Phi:\tau\mapsto\tau'$ form $S_i(\tau)$ and set $\tau'_j=Y_j-\mu(X_j)-S_j(\tau)$ at
treated $j$, with $\tau'_j=\tau_j$ at untreated $j$, whose $\tau_j$ enters no $S_i$.
Then $\Phi$ is Lipschitz in $\|\cdot\|_\infty$ with modulus
$\kappa_\tau=(\alpha_\tau/\sigma_\tau)\theta_{\max}\max_jd^{(1)}_j$, $d^{(1)}_j$ the
number of treated neighbours of $j$; if $\kappa_\tau<1$ the iteration converges
geometrically to a unique fixed point.

\textit{Proof.} 
Under the simplifications, $S_i(\tau)=\sum_{j\in\Nb(i)}\theta^b_{ij}Z_j
\{1+\alpha_\tau\tanh(\tilde\tau_j)\}$. Using $|\tanh'|\le1$ and
$\theta^b_{ij}\le\theta_{\max}$,
$|S_i(\tau)-S_i(\tau')|\le(\alpha_\tau/\sigma_\tau)\theta_{\max}\sum_{j\in\Nb(i)}Z_j
|\tau_j-\tau'_j|\le(\alpha_\tau/\sigma_\tau)\theta_{\max}d^{(1)}_i\|\tau-\tau'\|_\infty$.
At treated $j$, $|\Phi(\tau)_j-\Phi(\tau')_j|=|S_j(\tau)-S_j(\tau')|$, and at untreated
$j$ the difference is zero; taking the maximum over $j$ gives modulus $\kappa_\tau$, and
Banach applies when $\kappa_\tau<1$. With saturation the chain rule contributes
$\psi'(R_i)=1/(2|R_i|^{1/2})$, unbounded as $R_i\to0$. When $i$ has no treated
neighbour $R_i\equiv0$ and nothing depends on $\tau$; the problematic case is
cancellation, $\tanh\tilde\tau_j<-1/\alpha_\tau$ for some treated $j$. A modulus
$\kappa_\tau/(2\varsigma_0^{1/2})$ exists on $\{\min_{i:\,d^{(1)}_i>0}|R_i|\ge\varsigma_0\}$.
\hfill$\square$

\subsection{Proposition B.5 (error propagation) and proof}
\textit{Statement.} Under the conditions of B.4, if
$N^{-1}\sum_j(\hat\tau_j-\tau_j)^2\le\delta^2$ then
$N^{-1}\sum_i(\hat S_i-S_i)^2\le(\alpha_\tau/\sigma_\tau)^2\theta^2_{\max}\,d_{\max}^2\,\delta^2$,
$d_{\max}$ the maximum degree.

\textit{Proof.} 
Write $e_j=\hat\tau_j-\tau_j$ and $K=(\alpha_\tau/\sigma_\tau)\theta_{\max}$. By B.4
and Cauchy--Schwarz, $(\hat S_i-S_i)^2\le K^2d_i\sum_{j\in\Nb(i)}e_j^2$. Averaging over
$i$ and exchanging summation,
\[
N^{-1}\sum_id_i\sum_{j\in\Nb(i)}e_j^2=N^{-1}\sum_je_j^2D_j,\qquad
D_j=\sum_{i\in\Nb(j)}d_i\le d_jd_{\max}\le d_{\max}^2,
\]
whence $N^{-1}\sum_i(\hat S_i-S_i)^2\le K^2d_{\max}^2\,\delta^2$. If degree is
uncorrelated with $e_j^2$ the factor $d_{\max}^2$ may be replaced by
$\langle d^2\rangle=N^{-1}\sum_id_i^2$, the second moment of degree; this is a
heuristic, not part of the bound. \hfill$\square$

\textit{Remark.} Theorems~B.4--B.5 characterise the iterative co-estimation alternative
that SpilloverNet does \emph{not} take; the main text (Section~\ref{sec:est}) uses
supervised regression on a fixed two-hop ball, for which no fixed-point condition is
required (Prop.~B.2). They are included to show what the iterative route would cost,
not to describe the model we use.

\subsection{Proof of Theorem \ref{thm:floor}}
The first claim is the $L^2$ projection identity
$\E\{(\hat S_i-S_i)^2\}=\E[\{\hat S_i-\E(S_i\mid\Gr)\}^2]+\E\{\Var(S_i\mid\Gr)\}$.
For \eqref{eq:floor}, conditional on $\Gr$ the only randomness in $S_i$ is $\eta$, since
$\mu,\theta^b,c_i,\Nb(i),D_i$ are $\Gr$-measurable. Write $s=\sigma_\tau$ and
$\tilde\tau_j=m_j+u_j$ with $m_j$ $\Gr$-measurable, $u_j=\eta_j/s$,
$\E(u_j\mid\Gr)=0$, $\Var(u_j\mid\Gr)=\sigma^2_\eta/s^2$, independent across $j$; the
graph-level statistics $\bar\tau,s$ are treated as constants, their $O(N^{-1/2})$
fluctuation being absorbed in the remainder. Expanding $g_j=\tanh(\tilde\tau_j)$ about
$m_j$, $\Var(g_j\mid\Gr)=\sech^4(m_j)\sigma^2_\eta/s^2+O(\sigma^3_\eta)$. Since the $u_j$
are independent across $j$ given $\Gr$, cross terms in
$R_i=\sum_j\theta^b_{ij}Z_j(1+\alpha_\tau g_j)$ vanish and
$\Var(R_i\mid\Gr)=\alpha^2_\tau(\sigma^2_\eta/s^2)\sum_jZ_j(\theta^b_{ij})^2
\sech^4(m_j)+O(\sigma^3_\eta)$. The delta method with $\psi'(u)=1/(2|u|^{1/2})$ about
$c_i\bar R_i$ gives $\Var\{\psi(c_iR_i)\mid\Gr\}=c_i\Var(R_i\mid\Gr)/(4|\bar R_i|)+O(\sigma^3_\eta)$,
which is \eqref{eq:floor} times the threshold factor $c_i\in\{1,1.3\}$.
Since $D_i\cap\Nb(i)=\emptyset$, $\{\eta_k\}_{k\in D_i}$ is independent of
$\{\eta_j\}_{j\in\Nb(i)}$ given $\Gr$ and the covariance between $\psi(c_iR_i)$ and
$S^{(2)}_i$ is zero; the same expansion inside $S^{(2)}_i$ (Supplement~A) gives
\[
\Var\{\lambda_2S^{(2)}_i\mid\Gr\}
=\frac{\lambda_2^2\lambda^2\beta_\tau^2}{(|D_i|+1)^2}\cdot\frac{\sigma^2_\eta}{s^2}
\sum_{k\in D_i}Z_k\sech^4(m_k)+O(\sigma^3_\eta),
\]
which is the two-hop term omitted from \eqref{eq:floor}. Summing and substituting
$s=\sigma_\tau$ yields \eqref{eq:floor}. Because the generator is known,
$V^\star_i$ may be computed exactly by resampling $\eta$ with $(X,Z,G)$ held fixed.
\hfill$\square$

\subsection{Proof of Corollary \ref{cor:cate}}
An estimator of $\E(\tau_j\mid X_j)$ fit on units other than $j$ enters the predictor as
$\hat f(X_j)$ with $\hat f$ fixed given its training data, so the plug-in predictor is
$\Gr$-measurable and Theorem~\ref{thm:floor} applies. $V^\star_i$ depends on the joint
law of $(S,X,Z,G)$ only, not on $N$ or the estimator; and $\E(\eta_j\mid X_j)=0$ by (2),
so $\E(\tau_j\mid X_j)$ carries none of $\eta_j$. An estimator fit on $j$ itself reads
$Y_j$, which carries $\eta_j$; the corollary excludes that case. \hfill$\square$

\subsection{Proposition B.8 (expressivity) and proof sketch}
\textit{Statement.} $S_i$ is a permutation-invariant function of $(z,w,G)$ restricted
to $B_2(i)$, including the edges inside $B_2(i)$. If $|\Nb(i)\cap\Nb(j)|/(d_i+1)$ is
supplied as an edge feature and $C_i$ as a node feature, a two-layer message-passing
network with injective aggregation and universal update maps approximates $S_i$
uniformly on compact domains. From the inputs of Supplement~H alone it does not: the
two structural terms of $\theta^b_{ij}$ are triangle counts, which message passing
cannot compute \citep{xu2019,chen2020}.

\textit{Proof sketch.} 
By Proposition~B.2, $S_i$ depends on $(z,w,G)$ only through $B_2(i)$, invariant to
relabelling within each hop. With the two structural terms given as features, every
argument of $R_i$, $c_i$ and $S^{(2)}_i$ is a continuous function of node and edge
features at distance at most two, so a two-layer network with injective aggregation
separates whatever $S_i$ separates \citep{xu2019}, and uniform approximation follows
from universality of the update maps and the Stone--Weierstrass theorem on the algebra
of permutation-invariant functions. Without them, message passing is bounded by the
one-dimensional Weisfeiler--Lehman test \citep{xu2019}, which cannot count triangles
\citep{chen2020}; with continuous $X$ the count $|\Nb(i)\cap\Nb(j)|$ is a discontinuous
function of the feature multisets, outside the uniform closure of continuous
permutation-invariant functions. \hfill$\square$

\section{Paper-by-paper record}\label{app:record}

Throughout, $Z_j$ is neighbour $j$'s assignment, $g_i(Z_{\Nb_i})$ an exposure summary,
$Y_j$ a realised or counterfactual outcome, and $\tself{j}$ as in \eqref{eq:tauself}.

\smallskip\noindent\textit{Assignment-vector models.}
\citet{sobel2006}: $Y_i(z)$, and $Y_{ij}(z_i)$ under partial interference; neighbours
enter through assignments received. \citet{rosenbaum2007}: $y_i(z)$ with the uniformity
trial $\{y_i(0)\}$ as baseline; randomisation inference over $z$. \citet{hudgens2008}:
$Y_{ij}(z_i)$ and allocation-strategy averages $\bar Y_{ij}(z;\psi)$; the indirect effect
changes others' assignments while holding $i$ untreated. \citet{tchetgen2012}:
$Y_{ij}(a_{ij},a_{i,-j})$ with individual direct effect
$Y_{ij}(1,a_{i,-j})-Y_{ij}(0,a_{i,-j})$. \citet{manski2013}: $y^j(t_J)$ on the full
vector, and a structural form with peer mean treatment and peer mean outcome.
\citet{puelz2022}: randomisation tests for exposure contrasts, with the conditioning
set chosen from a biclique decomposition of the assignment--exposure graph; the exposure
is a function of $z$. In each, this is not a model of the form $Y_i=F(\tself{j})$.

\smallskip\noindent\textit{Exposure mappings.}
\citet{aronow2017}: $D_i=f(Z,\theta_i)$ with the consistency restriction that equal
exposures give equal outcomes; the exposure is induced by the assignment vector.
\citet{papadogeorgou2019}: cluster potential outcomes averaged over an allocation
programme $P_{\alpha,L}$. \citet{forastiere2021}: $Y_i(Z_i,G_i)$ with
$G_i=g_i(Z_{\Nb_i})$ and estimands $\tau(g),\delta(g;z)$. \citet{leung2020,leung2022}:
exposure summaries, and decaying dependence on assignments at distance. \citet{hu2022}: average direct effect
$\tau_{\mathrm{ADE}}$ and average indirect effect $\tau_{\mathrm{AIE}}=n^{-1}\sum_i
\sum_{j\neq i}\E\{Y_j(w_i=1;W_{-i})-Y_j(w_i=0;W_{-i})\}$; the indirect effect is the
cross-unit effect of changing $i$'s assignment on $j$. \citet{vazquezbare2023}:
$Y_{ig}(d,s)$ with $s=\sum_{j\neq i}D_{jg}$ the number of treated peers under
exchangeability; identification through $\E[Y_{ig}\mid D_{ig}=d,H_{ig}=h]=\E[Y_{ig}(d,h)]$.
In each, this is
not a model of the form $Y_i=F(\tself{j})$.

\smallskip\noindent\textit{Mediation and contagion.}
\citet{vanderweele2012}: contagion $\E\{Y_{i2}(0,Y_{i1}(1))-Y_{i2}(0,Y_{i1}(0))\}$ and
infectiousness $\E\{Y_{i2}(1,Y_{i1}(1))-Y_{i2}(0,Y_{i1}(1))\}$; the neighbour enters through
a counterfactual outcome, not the contrast. \citet{vanderweele2013}: $Y_{ijk}(t,m,g)$ with
$g$ a peer-mediator summary; in its linear specification $g$ reduces to additive
functions of peer treatments and covariates. \citet{ogburn2017}: contagion and
infectiousness through the alter's counterfactual infection status. \citet{imai2020}:
$Y_{i2}\{Z_i,Y^*_{i1}(Z_i)\}$ through a latent post-treatment intention.
\citet{kundu2026}: the REN-SEM with seven pathways; its linear case (LREN-SEM) has an
outcome equation in which neighbour exposure ($\beta_2$) and neighbour covariates
($\beta_6$) enter additively through the degree-normalised operator $S_{1i}$. In each,
this is not a model of the form $Y_i=F(\tself{j})$; the LREN-SEM and the linear
specification of \citet{vanderweele2013} are separable in the sense of Section~2, while
the general REN-SEM, with nonlinear $f_y$ and neighbour mediators that carry each
neighbour's own response, belongs with the outcome-reading models noted below.

\smallskip\noindent\textit{Structural and latent-type.}
\citet{manski1993}: linear-in-means through $\E(y\mid x)$ and peer characteristics.
\citet{hoshino2023}: $Y_j^{(d_j,d_{-j})}$ with equilibrium treatments and marginal
treatment responses. \citet{bhattacharya2020}: chain-graph models in which $Y_i$ depends on neighbours'
treatments and outcomes through undirected edges, with the network itself possibly
misspecified; the neighbour enters through $Z_j$ and $Y_j$, not the contrast.
\citet{krauth}: $y_i(\tau_i,\{\tau_j\})$ with latent types and
pairwise primitive $PE(\tau_i,\tau_j)$; the closest paper, but $\tau_i$ is a type, not
$Y_i(1)-Y_i(0)$. \citet{zhang2025}: $Y_i=\beta_0+\sum_{j\neq i}\beta_{ji}G_{ji}Z_j+\gamma
Z_i+\varepsilon_i$ with $\beta_{ji}=\beta_{C_j,C_i}$ and constant $\gamma$; heterogeneity
is community-level and the direct effect does not vary. On this generator SpilloverNet
and Mean are indistinguishable ($1.33\%$ vs $1.85\%$ NMAE, $p=0.41$), as predicted
before the run: a gate gains nothing when everyone responds the same. In each, this is
not a model of the form $Y_i=F(\tself{j})$.

\smallskip\noindent\textit{Separability.}
Of the models above, those that read $Y_j$---\citet{manski1993,manski2013,bhattacharya2020}
and the contagion models of \citet{vanderweele2012,ogburn2017,imai2020}---are not
separable in the sense of Section~2: $Y_j=\mu(X_j)+\tau_jZ_j+S_j+\varepsilon_j$ carries
the product $\tau_jZ_j$, but only together with $S_j$ and $\varepsilon_j$, which is the
reflection problem of Section~5. The rest are separable, with responsiveness absent; in
\citet{zhang2025} the weight varies with the sender's community, an attribute, which
separability permits.


\section{Pre-period versus direct-response estimators}
\label{app:preperiod}
Both estimators are per-unit and both carry $\eta_j$. The pre-period difference
$Y_j-Y^{\mathrm{pre}}_j-\bar\Delta_{\mathrm{untreated}}$ contains $S_j-\bar S$; the
direct-response difference $Y^{\mathrm{mid}}_j-Y^{\mathrm{pre}}_j$ does not. NMAE (\%)
of SpilloverNet with each substituted for $\tau$, five seeds; $r$ is the correlation
of $\hat\tau$ with $\tau$.

\begin{center}\small
\begin{tabular}{lccccc}
\toprule
$\sigma_\eta$ & $0.1$ & $0.3$ & $0.6$ & $1.0$ & $1.5$\\
\midrule
Pre-period NMAE & $20.2$ & $29.7$ & $30.4$ & $25.2$ & $20.5$\\
\quad $r$ & $0.19$ & $0.22$ & $0.28$ & $0.33$ & $0.36$\\
Direct-response NMAE & $20.9$ & $29.9$ & $27.5$ & $22.8$ & $19.1$\\
\quad $r$ & $0.24$ & $0.28$ & $0.34$ & $0.37$ & $0.39$\\
\bottomrule
\end{tabular}
\end{center}
The predictions fixed before the runs were that direct-response error would be no higher at
every $\sigma_\eta$ and $r$ higher at every $\sigma_\eta$. The second held throughout;
the first held for $\sigma_\eta\ge0.6$, while at $0.1$ and $0.3$ pre-period was lower by
$0.7$ and $0.2$ points, within about one seed s.d.\ ($0.6$ and $0.8$), where $S_j$ is
small relative to $\tau_j$. Both estimators share the
hump shape, which reflects the crossover between the covariate channel and the measured
channel rather than either estimator's noise.

\section{Stability: normalisation and contraction}
\label{app:two}
Write $T_\ell(h)=\sigma(W_\ell\mathcal A_\ell h)$ with $(\mathcal A_\ell h)_i=\sum_j
\alpha^{(\ell)}_{ij}h_j$. If $\sigma$ is $1$-Lipschitz and $\sum_j\alpha^{(\ell)}_{ij}
=\rho_\ell$, then $\|T_\ell(h)-T_\ell(h')\|_\infty\le\rho_\ell\|W_\ell\|\|h-h'\|_\infty$,
and $\kappa=\max_\ell\rho_\ell\|W_\ell\|<1$ gives a unique fixed point for the weight-tied
iteration (proof as in B.3; with the self path the modulus is
$\rho_\ell\|W_\ell\|+\|U_\ell\|$, so the $\kappa$ reported below, which omits
$\|U_\ell\|$, is a lower bound on the modulus and its failure is robust). The
factorisation separates gate normalisation ($\rho<1$,
free) from full contraction ($\kappa<1$). On $25$ trained checkpoints from the full
grid (five $\sigma_\eta$ levels, five seeds), $\rho_\ell\approx0.51$ at every layer
while $\|W_\ell\|$ grows $1.96\to3.15$ with depth. Computing $\kappa$ per checkpoint
as $\max_\ell\rho_\ell\|W_\ell\|$ and then averaging gives $\kappa\in[1.61,1.90]$,
mean $1.74$, $0/25$ certified. A reduced-scale run ($150$ graphs of $60$ nodes, $60$
epochs, three seeds, $15$ checkpoints) shows the same signature at smaller magnitude:
$\rho_\ell\approx0.50$, $\|W_\ell\|$ from $1.75$ to $2.9$, $\kappa\in[1.39,1.56]$,
mean $1.50$, $0/15$ certified, with no trend in $\sigma_\eta$. The certificate fails
through the weight norms, not the gate, and the failure grows with scale. Because the
model is a fixed two-hop regression (Prop.~B.2) nothing rests on it; we report it to
separate the two conditions, normalisation and contraction, that classical models
impose together.

\section{Attention with and without a decoupled self path}
\label{app:gat}
Synthetic grid ($500\times100$, $200$ epochs, five seeds), oracle $\tau$. GATv2 uses
four heads and standard self-loops (own state inside the softmax). GATv2$+$self adds a
separate linear self path per layer and removes the self-loop, mirroring the
$U_\ell$ term of \eqref{eq:layer}. Mean NMAE (\%) over five seeds; the Flickr column
gives mean (s.d.) over $25$ runs. GATv2$+$self and GraphSAGE coincide on Flickr to two
decimals ($10.840$ vs $10.838$); they are distinct runs. GATv2 on BlogCatalog completed
$11$ of $25$ runs and is not reported.

\begin{center}\small
\begin{tabular}{lcccccc}
\toprule
 & \multicolumn{5}{c}{Synthetic, $\sigma_\eta$} & Flickr\\
\cmidrule(lr){2-6}
 & $0.1$ & $0.3$ & $0.6$ & $1.0$ & $1.5$ & ($\bar d\approx63$)\\
\midrule
GATv2 (no self path) & $27.0$ & $27.0$ & $27.1$ & $26.7$ & $27.1$ & $28.3$ ($2.49$)\\
GATv2 $+$ decoupled self path & $13.2$ & $13.7$ & $13.3$ & $12.9$ & $12.9$ & $10.8$ ($1.59$)\\
GraphSAGE (has a root weight) & $15.4$ & $15.5$ & $15.6$ & $15.7$ & $15.9$ & $10.8$ ($1.22$)\\
SpilloverNet & $13.8$ & $14.1$ & $14.0$ & $13.8$ & $13.8$ & $7.7$ ($0.65$)\\
\bottomrule
\end{tabular}
\end{center}
Two predictions were fixed before the runs. (i) That GATv2$+$self would be within two
points of GraphSAGE on the synthetic grid. It did better than that, landing
$0.4$--$0.9$ points below SpilloverNet at every $\sigma_\eta$. The flat $27\%$ of the
undecoupled model across $\sigma_\eta$ indicates it does not exploit the responsiveness
channel at all when own state competes with neighbours in the softmax. (ii) That if the
row sum $\rho_\ell<1$ matters at high degree, GATv2$+$self would land above SpilloverNet
on Flickr. It does, by $3.1$ points ($n=25$ each), with the same inputs and self path.
So the normalised gate costs $0.4$--$0.9$ points at degree $5$--$10$ and gains $3.1$
points at degree $63$.

\section{Targeting at larger budgets}
\label{app:budgets}
Same protocol as Table~\ref{tab:target}; regret (\%) $\pm$ s.e.\ over $50$ graphs, and in
parentheses the mean true $\tau$ of the selected units.

\begin{center}\scriptsize\setlength{\tabcolsep}{3pt}
\begin{tabular}{llccccc}
\toprule
Budget & Policy & $\sigma_\eta=0.1$ & $0.3$ & $0.6$ & $1.0$ & $1.5$\\
\midrule
$20\%$ & Direct-response & $25.0\pm1.4$ ($0.66$) & $27.0\pm1.3$ ($0.68$) & $28.6\pm1.4$ ($0.77$) & $32.2\pm1.5$ ($0.90$) & $35.1\pm1.5$ ($1.10$)\\
 & Pre-period & $25.6\pm1.5$ ($0.65$) & $26.8\pm1.4$ ($0.67$) & $29.2\pm1.5$ ($0.75$) & $31.1\pm1.5$ ($0.89$) & $35.4\pm1.5$ ($1.06$)\\
 & CATE & $25.9\pm1.8$ ($0.73$) & $30.9\pm1.6$ ($0.72$) & $36.5\pm1.5$ ($0.76$) & $41.5\pm1.7$ ($0.82$) & $45.8\pm1.5$ ($0.92$)\\
 & Degree only & $27.2\pm1.7$ ($0.61$) & $30.0\pm1.6$ ($0.60$) & $34.8\pm1.7$ ($0.60$) & $40.0\pm1.8$ ($0.60$) & $45.6\pm1.9$ ($0.59$)\\
 & Eigenvector & $31.6\pm1.6$ ($0.59$) & $33.8\pm1.5$ ($0.59$) & $37.7\pm1.6$ ($0.59$) & $42.4\pm1.7$ ($0.59$) & $47.6\pm1.8$ ($0.58$)\\
 & Random & $49.0\pm1.5$ ($0.60$) & $51.1\pm1.4$ ($0.59$) & $54.5\pm1.5$ ($0.58$) & $58.3\pm1.6$ ($0.57$) & $62.4\pm1.7$ ($0.56$)\\
\midrule
$30\%$ & Direct-response & $22.1\pm1.2$ ($0.65$) & $23.9\pm1.1$ ($0.67$) & $26.6\pm1.1$ ($0.73$) & $31.2\pm1.3$ ($0.81$) & $34.6\pm1.2$ ($0.97$)\\
 & Pre-period & $22.6\pm1.2$ ($0.64$) & $24.1\pm1.2$ ($0.66$) & $27.0\pm1.2$ ($0.71$) & $30.2\pm1.2$ ($0.81$) & $34.7\pm1.3$ ($0.94$)\\
 & CATE & $23.6\pm1.3$ ($0.71$) & $28.2\pm1.2$ ($0.71$) & $34.4\pm1.2$ ($0.74$) & $38.6\pm1.2$ ($0.80$) & $44.2\pm1.1$ ($0.86$)\\
 & Degree only & $23.4\pm1.2$ ($0.61$) & $26.1\pm1.2$ ($0.61$) & $31.1\pm1.3$ ($0.60$) & $36.8\pm1.4$ ($0.60$) & $42.8\pm1.5$ ($0.60$)\\
 & Eigenvector & $26.8\pm1.3$ ($0.60$) & $29.4\pm1.3$ ($0.60$) & $34.1\pm1.3$ ($0.59$) & $39.5\pm1.4$ ($0.59$) & $45.3\pm1.5$ ($0.58$)\\
 & Random & $42.0\pm1.3$ ($0.61$) & $44.9\pm1.2$ ($0.60$) & $49.4\pm1.2$ ($0.59$) & $54.0\pm1.3$ ($0.56$) & $58.9\pm1.3$ ($0.54$)\\
\bottomrule
\end{tabular}
\end{center}
At the $10\%$ budget the pre-period estimator gives regret $30.1$, $31.1$, $31.6$,
$32.3$, $35.0$ across $\sigma_\eta$ (selected $\bar\tau=1.35$, $\bar d=10.2$), between
direct-response and CATE. The oracle is the $\hat\tau_j(1+d_j)$ rule with true $\tau$, not the optimum of
\eqref{eq:decision}; every policy's chosen $Z$ is scored by plugging it into the true
generator, so oracle regret is zero by definition at every budget; its selected $\bar\tau$ is $0.96$,
$1.10$, $1.43$, $1.92$, $2.57$ at $20\%$ and $0.91$, $1.03$, $1.31$, $1.73$, $2.28$ at
$30\%$. The gap between direct-response and degree-only targeting narrows as the budget
grows ($13.7\to10.5\to8.2$ points at $\sigma_\eta=1.5$ for $10$, $20$, $30\%$), because
a larger budget leaves less room for selectivity to matter. CATE $\ge$ degree-only holds
for $\sigma_\eta\ge0.3$ at every budget (tie at $1.5$) and reverses at $0.1$ for the
$10$ and $20\%$ budgets (tie at $30\%$).
Direct-response $\le$ pre-period in $12$ of $15$ cells; the three exceptions are within
one standard error.

\section{Experimental details}
\label{app:details}

\textbf{Synthetic graphs.} $500$ graphs of $100$ nodes, cycling three families:
Erd\H{o}s--R\'enyi with $p\sim U(0.05,0.10)$; Barab\'asi--Albert with
$m\in\{2,3,4\}$; Watts--Strogatz with $k\in\{4,\dots,7\}$ and rewiring probability
$U(0.2,0.4)$. Covariates $X_{i1}\sim N(0,1)$, $X_{i2}\sim U(-1,1)$; node degree is
appended as a third input. Treatment: $10\%$ random seeding, then three diffusion
rounds in which an untreated node adopts with probability $0.30$ times its treated
neighbour share. Graphs are split into training, validation and test sets with a fixed
seed; model selection uses validation NMAE and all reported numbers are on test graphs.

\textbf{Training.} AdamW, learning rate $10^{-3}$, weight decay $10^{-5}$,
ReduceLROnPlateau (factor $0.5$, patience $15$), $200$ epochs, squared-error loss on
$S_i$; model selection on validation NMAE. Four hidden layers of width $128$ for
SpilloverNet, GraphSAGE and the GATv2 variants; GATv2 uses four heads. The gate score
$e^{(\ell)}_{ij}$ is a two-layer MLP on
$[h_i,\,h_j,\,\mathrm{edge}_{ij}]$, where $\mathrm{edge}_{ij}$ collects pairwise
quantities of the two endpoints: the absolute differences of their first covariate and
of their degree, and both treatments $Z_i,Z_j$.
MLP: same widths, no graph. NetDeconf: four-layer GCN encoder (width $128$), MLP head
on $[h_i,Z_i,\hat\tau_i]$, Wasserstein-1 IPM between treated and control
representations with weight $10^{-3}$. NetEst: same encoder with two adversarial
discriminators and gradient reversal. Regulariser strengths were swept and the
default reported; stronger values degraded both baselines monotonically.

\textbf{Estimators of $\tau$.} CATE plug-in: an R-learner \citep{nie2021} fit on
$(X_i,Z_i,Y_i)$ of the training graphs and applied to held-out graphs; Theorem~1 makes
the floor method-independent, and a second CATE method is left for future work. Direct response:
$\hat\tau_j=Y^{\mathrm{mid}}_j-Y^{\mathrm{pre}}_j$ with
$Y^{\mathrm{pre}}_j=\mu(X_j)+\varepsilon^{\mathrm{pre}}_j$ and
$Y^{\mathrm{mid}}_j=\mu(X_j)+\tau_jZ_j+\varepsilon'_j$,
$\varepsilon^{\mathrm{pre}},\varepsilon'$ drawn independently of $\varepsilon$ from the
same noise law as the outcome equation. Pre-period:
$Y_j-Y^{\mathrm{pre}}_j$ minus the mean change among untreated units.

\textbf{Real graphs.} Flickr and BlogCatalog in the preprocessed form of
\citet{guo2020}: node inputs are the leading principal components of the original
node attributes, plus standardised log-degree and clustering coefficient. Outcomes are
generated by the model of Supplement~A on the real topology with $10\%$ seeding and
the same diffusion as above; a ``realisation'' redraws $(\tau,Z,\varepsilon)$ on the
fixed graph (realisation seeds $100,\dots,104$). Five realisations $\times$ five
training seeds per model; the split seed is $42$.

\textbf{Floors and targeting.} Floors: $50$ held-out graphs, $300$ resamples of $\eta$
with $(X,Z,G)$ fixed. Targeting: $50$ graphs; the pilot seeds a random $10\%$ and
diffuses as above; the scoring rule and oracle are as in Section~\ref{sec:exp}.

\textbf{Compute.} All runs used a single Google Colab GPU.

\textbf{Ablations per $\sigma_\eta$.} NMAE (\%) of SpilloverNet with true $\tau$, with
$\tau$ shuffled across nodes within each graph, and with the gate removed
($\alpha_{ij}=1/d_i$); mean (s.d.) over five seeds.

\begin{center}\small
\begin{tabular}{lccccc}
\toprule
$\sigma_\eta$ & $0.1$ & $0.3$ & $0.6$ & $1.0$ & $1.5$\\
\midrule
SpilloverNet, true $\tau$ & $13.8$ ($0.5$) & $14.1$ ($0.7$) & $14.0$ ($0.3$) & $13.8$ ($0.4$) & $13.8$ ($0.4$)\\
Shuffled $\tau$ & $62.6$ ($3.1$) & $60.2$ ($1.0$) & $59.5$ ($1.0$) & $59.6$ ($0.8$) & $59.8$ ($1.3$)\\
No gate (Mean) & $14.7$ ($0.4$) & $14.8$ ($0.4$) & $15.2$ ($0.2$) & $15.3$ ($0.2$) & $15.7$ ($0.4$)\\
\bottomrule
\end{tabular}
\end{center}

\textbf{Seed variability for Table~\ref{tab:decomp}.} Mean (s.d.) over five seeds.

\begin{center}\small
\begin{tabular}{lccccc}
\toprule
$\sigma_\eta$ & $0.1$ & $0.3$ & $0.6$ & $1.0$ & $1.5$\\
\midrule
No-$\tau$ & $21.2$ ($0.3$) & $34.9$ ($0.8$) & $42.1$ ($0.7$) & $44.4$ ($0.7$) & $45.6$ ($0.9$)\\
CATE plug-in & $24.5$ ($0.5$) & $40.0$ ($0.9$) & $47.5$ ($1.4$) & $50.5$ ($1.5$) & $51.9$ ($1.8$)\\
Direct-response plug-in & $20.9$ ($0.6$) & $29.9$ ($0.8$) & $27.5$ ($0.4$) & $22.8$ ($0.5$) & $19.1$ ($0.4$)\\
Oracle $\tau$ & $13.8$ ($0.5$) & $14.1$ ($0.7$) & $14.0$ ($0.3$) & $13.8$ ($0.4$) & $13.8$ ($0.4$)\\
\bottomrule
\end{tabular}
\end{center}

\section{Predictions fixed before the runs}
\label{app:prereg}
Decision rules and the following directional predictions were written down before
each experiment was run.

\begin{center}\small
\begin{tabular}{p{0.52\textwidth}lp{0.28\textwidth}}
\toprule
Prediction & Where & Outcome\\
\midrule
Direct-response $r(\hat\tau,\tau)$ rises with $\sigma_\eta$ & Sec.~5, Supp.~D & Held ($0.24\to0.39$)\\
Direct-response NMAE $\le$ pre-period at every $\sigma_\eta$ & Supp.~D & Held for $\sigma_\eta\ge0.6$; pre-period lower by $0.7$, $0.2$ at $0.1$, $0.3$, within about one seed s.d.\\
Direct-response $r$ $>$ pre-period $r$ at every $\sigma_\eta$ & Supp.~D & Held\\
GATv2$+$self within two points of GraphSAGE (synthetic) & Supp.~F & Exceeded: $0.4$--$0.9$ below SpilloverNet\\
If $\rho_\ell<1$ matters at high degree, GATv2$+$self trails SpilloverNet on Flickr & Supp.~F & Held ($3.1$ points)\\
CATE targeting regret $\ge$ degree-only at every $\sigma_\eta$ & Table~2, Supp.~G & Held for $\sigma_\eta\ge0.3$; reversed at $0.1$ (tie at $30\%$)\\
SpilloverNet $\approx$ Mean on the constant-$\tau$ generator of \citet{zhang2025} & Supp.~C & Held ($1.33$ vs $1.85$, $p=0.41$)\\
\bottomrule
\end{tabular}
\end{center}

\end{document}